\documentclass{ifacconf}
\usepackage{xcolor}
\usepackage{amsmath}
\usepackage{amsfonts}
\usepackage{amssymb}
\usepackage{tikz}
\usepackage{afterpage}
\usepackage{placeins}
\usepackage{bm}
\usepackage{paralist}
\usepackage{graphicx}
\usepackage{comment}
\usepackage{float}
\usepackage{algorithm}
\usepackage{algpseudocode}
\usepackage{natbib}
\algrenewcommand\textproc{}% change function titles to \textsc
\newtheorem{assumption}{\textbf{Assumption}}
\newtheorem{problem}{\textbf{Problem}}
\newtheorem{definition}{\textbf{Definition}}
\newtheorem{lemma}{\textbf{Lemma}}
\newtheorem{remark}{\textbf{Remark}}

\begin{document}
\begin{frontmatter}

\title{A Generalized Nash Equilibrium-Seeking Scheme for Trauma Resuscitation } 
% Title, preferably not more than 10 words.

%\thanks[footnoteinfo]{Sponsor and financial support acknowledgment
%goes here. Paper titles should be written in uppercase and lowercase
%letters, not all uppercase.}

\author[Ekpo]{Promise Ekpo} 
\author[Ekpo]{Angelique Taylor} 
\author[Molu]{Lekan Molu}

\address[Ekpo]{Cornell University, NY.  \{poe6, amt298\}@cornell.edu.}
\address[Molu]{Bala Cynwyd, PA. lekanmolu@scriptedonachip.com}

%%%%%%%%%%%%%%%%%%%%%%%%%%%%%%%%%%%%%%%%%%%%%%%%%%%%%%%%%%%%%%%%%%%%%%%%%%%%%%%%

\begin{abstract}               
% Angelique's updates
Trauma resuscitation is a clinical process for treating life-threatening physiological disorders in safety-critical environments, driven by the experience of healthcare workers (HCWs). 
Designing and optimizing quantifiable metrics that accurately capture HCW decisions may augment current resuscitation procedures with the potential to improve patient outcomes.
This motivates our socio-technical formulation of trauma resuscitation as a distributed generalized Nash equilibrium (GNE)-seeking game with coupled inequality constraints. This method is optimized over a time-varying communication graph. We introduce novel insights from clinical experience to model HCWs behavior.  This work facilitates the best possible resuscitation outcome given HCWs' workloads, schedules, competencies, and limited resources.
\end{abstract}

\begin{keyword}
	Cyber-physical and human systems (CPHS); Social computing; Game theory.
\end{keyword}

%%%%%%%%%%%%%%%%%%%%%%%%%%%%%%%%%%%%%%%%%%%%%%%%%%%%%%%%%%%%%%%%%%%%%%%%%%%%%%%%

\end{frontmatter}

%%%%%%%%%%%%%%%%%%%%%%%%%%%%%%%%%%%%%%%%%%%%%%%%%%%%%%%%%%%%%%%%%%%%%%%%%%%%%%%%

\definecolor{light-blue}{rgb}{0.30,0.35,1}
\definecolor{light-green}{rgb}{0.20,0.49,.85}
\definecolor{purple}{rgb}{0.70,0.69,.2}

\newcommand{\lb}[1]{\textcolor{light-blue}{#1}}
\newcommand{\bl}[1]{\textcolor{blue}{#1}}

\newcommand{\maybe}[1]{\textcolor{gray}{\textbf{MAYBE: }{#1}}}
\newcommand{\inspect}[1]{\textcolor{cyan}{\textbf{CHECK THIS: }{#1}}}
\newcommand{\more}[1]{\textcolor{red}{\textbf{MORE: }{#1}}}
\providecommand{\sectionautorefname}{Section}
%\renewcommand{\sectionautorefname}{Sec.}
%\renewcommand{\equationautorefname}{equation}
%\renewcommand{\subsectionautorefname}{$\S$}
%\renewcommand{\subsubsectionautorefname}{$\S$}
%\renewcommand{\chapterautorefname}{Chapter}
%\renewcommand{\definitionautorefname}{Def.}

% FYA
\newcommand{\cmt}[1]{{\footnotesize\textcolor{red}{#1}}}%{#2}
\newcommand{\todo}[1]{\textcolor{purple}{TO-DO: #1}}
\newcommand{\stopped}[1]{\color{red}STOPPED HERE #1\hrulefill}

%Text commands
\newcounter{mnote}
\newcommand{\marginote}[1]{\addtocounter{mnote}{1}\marginpar{\themnote. \scriptsize #1}}
\setcounter{mnote}{0}
\newcommand{\ie}{$i.e.$\ }
\newcommand{\eg}{e.g.\ }
\newcommand{\cf}{c.f.\ }
\newcommand{\yes}{\checkmark}
\newcommand{\no}{\ding{55}}

%Reference commands
\newcommand{\flabel}[1]{\label{fig:#1}}
\newcommand{\seclabel}[1]{\label{sec:#1}}
\newcommand{\tlabel}[1]{\label{tab:#1}}
\newcommand{\elabel}[1]{\label{eq:#1}}
\newcommand{\alabel}[1]{\label{alg:#1}}
\newcommand{\fref}[1]{\cref{fig:#1}}
\newcommand{\sref}[1]{\cref{sec:#1}}
\newcommand{\tref}[1]{\cref{tab:#1}}
\newcommand{\eref}[1]{\cref{eq:#1}}
\newcommand{\aref}[1]{\cref{alg:#1}}

\newcommand{\bull}[1]{$\bullet$ #1}
\newcommand{\argmax}{\text{argmax}}
\newcommand{\argmin}{\text{argmin}}
\newcommand{\mc}[1]{\mathcal{#1}}
\newcommand{\bb}[1]{\mathbb{#1}}

\def\tidx{t}
%\def\comment
%\def\value{V}
% from https://www.cs.jhu.edu/~jason/advice/write-the-paper-first.html
\newcommand{\Note}[1]{}
\renewcommand{\Note}[1]{\hl{[#1]}}  % comment out this definition to suppress all Notes
%\algnewcommand\algorithmicforeach{\textbf{for each}}
%\algdef{S}[FOR]{Foreach}[1]{\algorithmicforeach\ #1\ \algorithmicdo} %

%\newcolumntype{M}[1]{>{\centering\arraybackslash}m{#1}}
\def\coriolis{\textbf{\textit{C}}}
\def\massinertia{\textbf{\textit{M}}}
\def\torque{\bm{\tau}}
\def\frictionvec{\textbf{\textit{f}}}
\def\Smat{\textbf{\textit{S}}}
\def\sgn{\text{sgn}}
\def\Bmat{\textbf{\textit{B}}}
\def\wheelrad{\textbf{\textit{r}}}

\def\stateweight{\textbf{\textit{w}}_x}
\def\actionweight{\textbf{\textit{w}}_u}
\def\advactionweight{\textbf{\textit{w}}_v}

%Thesis defs
\def\kau{\mc{K}}
\def\particle{\textbf{x}}
\def\deformationgrad{\textbf{F}}
\def\refconf{\bm{\chi}_0}
\def\refconfbody{\mathscr{B}_0}
\def\conf{\bm{\chi}}
\def\currconf{\bm{\chi}}
\def\Eulerian{\mc{E}}
\def\cauchystress{\bm{\sigma}}
\def\stresscomp{\sigma}
\def\currconfbody{\mathscr{B}}
\def\strain{W}
\def\materialresponse{\textbf{G}}
\def\orthoggroup{{\textit{SO}}(3)}
\def\liegroup{{\mathbb{SE}}(3)}
\def\liealgebra{\mathfrak{se}(3)}
\def\identity{\textbf{I}}
\newcommand{\trace}[1]{\textbf{tr}(#1)}
\def\leftcauchy{\textbf{B}}
\def\rightcauchy{\textbf{C}}
\def\fiber{\textbf{dx}}

\def\dof{\text{DOF }}
\def\dofs{\text{DOFs }}
\def\reline{\mathbb{R}}
\def\curve{\deformationgrad}
\def\twist{{\xi}}
\def\contactforce{\tilde{F}}
\def\contactforcecomp{f}
\def\gaussianmap{\textbf{}n}
\def\contacttorquecomp{\tau}
\def\wrt{with respect to }
\def\curveparam{\position}
\def\basis{\bm{e}}
\def\pose{{g}}
\def\selmap{B}
\def\manipmap{{G}}
\def\jacob{\bm{J}}
\def\hinf{\mc{H}_\infty}
\def\htwo{\mc{H}_2}
\def\position{\textbf{r}}
\def\deformationgradcur{\textbf{H}}
\def\eulerianvel{\textbf{v}(\position, t)}
\def\headparam{\zeta}
\def\cspace{\mathcal{C}}

% mechanism defs
\def\wallthickness{1.5cm}
\def\sorodiam{5cm}
\def\sorodiamdim{5-6.25cm}

% inline macros
\newcommand{\putsoro}[2]{\includegraphics[width=.45\columnwidth,height=#2\columnwidth]{../../../PhDThesis/figures/#1}}
\newcommand{\sorowidth}{.35}
% \afterpage{
% %%%%%%%%%%%%%%%%%%%%%%%%%%%%%%%%%%%%%%%%%%%%%%%%%%%%%%%%%%%%%%%%%%%%%%%%%%%%%%%%
% \begin{figure}[t!]
% \centering
% \includegraphics[width=\columnwidth]{camreadyimages_v3/Fig2_Topology(2).png}
% \caption{Communication graph $\mc{G}(\mc{A})(k)$ at
% $k=199$. Nodes are $n=6$ healthcare workers; edges link pairs within
% radius $r=200$ ft. Player positions drift stochastically each iteration,
% yielding a time-varying neighbor structure consistent with Assumption~1.}
% \label{fig:topology}
% \end{figure}

% \begin{figure*}[!t]
% \centering
% \includegraphics[width=\textwidth]{camreadyimages2/Fig2_Topology.png}
% \caption{\textbf{Time-varying communication graph $\mc{G}(\mc{A})(t)$.}
% Snapshots at iterations $k = 0, 100, 199$ showing player positions and active edges within radius $r = 200$ ft.} 
% \label{fig:topology}
% \end{figure*}

\section{Introduction}
Trauma resuscitation (TR) is a collaborative, often conflicting, coordination effort among a team of healthcare workers (HCWs). It is often characterized by significant differential in 
\begin{inparaenum}[(i)]
	\item skillset;  
	\item mental and physical alacrity; and 
	\item intra-team communication skills. 
\end{inparaenum} 
These factors are shaped by HCWs' workplace experience and can degrade over time. The degradation  may spur burnout and fatigue --- undermining productivity; miscommunication  informs duplicated execution of tasks --- engendering redundancy. Across medical organizational hierarchies, through verbal and aural cues~\citep{sarcevic2012teamwork, ogundare_integrated_2025} HCWs can improve  decision quality if a systematic and quantifiable  algorithmic framework is integrated into workflows to track performance. Previous efforts have integrated autonomous systems to support team dynamics~\citep{TaylorHRI24, Tanjim_2025}, deliver supplies~\citep{e2022assistant}, or sanitize surgical rooms~\citep{sun2007port}.  

\begin{figure}[t!]
\centering
\includegraphics[width=\columnwidth]{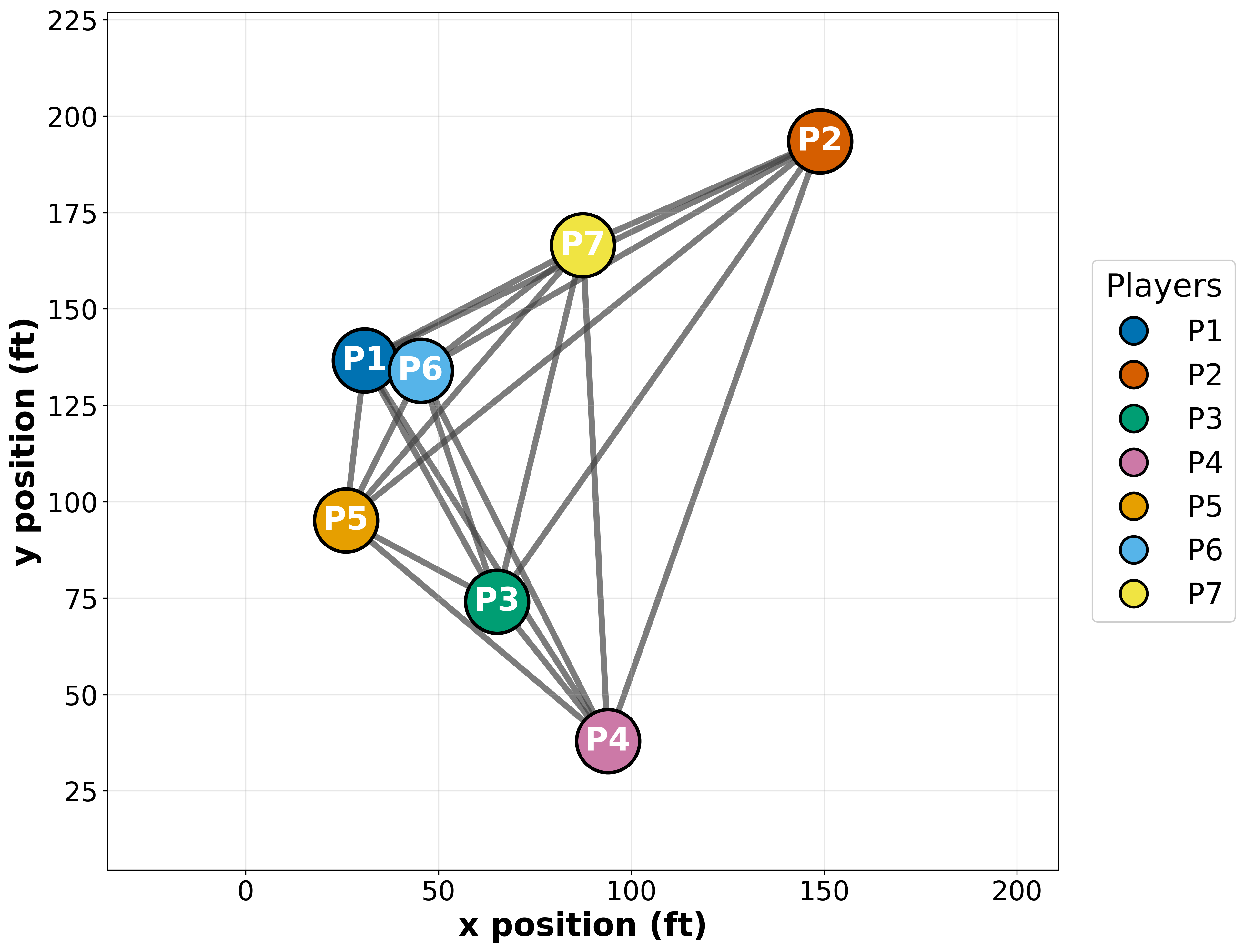}
\caption{Communication graph at
$k=0$. Nodes are $n=7$ healthcare workers; edges link pairs within
radius $r=200$ ft. Player positions yield a time-varying neighbor structure.}
\label{fig:topology}
\end{figure}
A well-executed TR procedure involves a  deliberative process where skill-informed task delegation by clinic leaders and responsibilities hand-off among peers are well-executed in spite of time constraints.
We set forth to design socio-technical process that
\begin{inparaenum}[(i)]
	\item supports clinical decision-making through automated task allocation among HCWs;
	\item maximizes HCW skill levels while counterbalancing other HCWs' skill levels; and
	\item enhances communication efficacy whilst reducing duplicate information exchange.
\end{inparaenum} 
This work studies quantifiable algorithms in high-stakes  socio-technical medical settings to enhance patient outcomes.  We take the view of a game-theoretic, collaborative TR setting, informed by our observations in a representative medical simulation training environment for HCWs at Weill Cornell Emergency Medicine in New York City\footnote{The BASE Camp event for interprofessional pediatric emergency medicine team training in the emergency care of critically ill and injured children was employed~\citep{WeillCornell}.}. 
We provide a mathematical characterization between observed skill levels and task models in the algorithmic framework we present.
% \subsection{Multiagent Healthcare Automation}
% \label{subsec:back::hcw_auto}

The role of shared mental models and decentralized communication workflows has been described in high-duress trauma resuscitation  contexts~\citep{sarcevic2012teamwork, Batra_Pioldi_Ekpo_Sayatqyzy_Maruur_Otieno_Ching_Taylor_2026}. Co-design studies have proposed cognitive aids and augmented-reality systems to improve information flow and situational awareness for distributed clinical teams~\citep{TaylorHRI24}. MATEC~\citep{cho_application_2025} and TraumaFlow~\citep{neumann_traumaflowdevelopment_2024} demonstrate that software agents can align team actions with clinical protocols  with centralized controllers. 
Distributed optimization in games over networks may be characterized with generalized Nash equilibrium (GNE)-seeking algorithms that better enhance  multi-agent socio-technical systems with coupling constraints~\citep{dave_social_2022}. 
% Operator-theoretic methods, including forward--backward splitting in monotone inclusions, provide scalable distributed algorithms with convergence guarantees~\citep{belgioioso_distributed_2021}.
% When agents interact over sparse networks and make decisions based on local observations, these tools yield consensus-seeking strategies with provable stability. 
Models incorporating agent dynamics, constraints, and communication topology align well with real-world healthcare teams. In these environments, agents exhibit  rationality under partial observability.

The rest of this paper is structured as follows: \S \ref{sec:background} describes the notations and games background. Our hypothesis is presented in \S \ref{sec:innovs} and it is evaluated on numerical test environments in \S \ref{sec:numerics}. We conclude the paper in \S \ref{sec:conclude}.

%%%%%%%%%%%%%%%%%%%%%%%%%%%%%%%%%%%%%%%%%%%%%%%%%%%%%%%%%%%%%%%%%%%%%%%%%%%%%%%%

\section{Background and Preliminaries}
\seclabel{background}

%Our work builds on recent advances in distributed GNE-seeking solution concepts, and applies embodied action prediction and generation in generative modeling to account for the complexities of fuzzy decision-making under confounding surrounding observations. 

%%%% Angelique commented out; add back is space permits
%This section describes healthcare automation in general multi-agent high-duress environments; which are then contextualized within distributed GNE (Sec. \ref{subsec:back::hcw_auto}). Trauma resuscitation as a distributed game is discussed in (Sec. \ref{subsec:back::trauma_resus}).  The mathematical notations used throughout the rest of the paper are presented (Sec. \ref{subsec:back::notations}) a a games taxonomy is presented in \ref{subsec:back::machinery}. Assumptions, definitions, lemmas and theorems used to build our results are given in (Sec. \ref{subsec:back:theoretical_machines}).

\subsection{General Notations} %Preliminaries}
\label{subsec:back::notations}

The set of real (non-negative) numbers is denoted, $\reline\, (\reline_+)$, the $m$-dimensional vector space is $\reline^m$, and the $n\times n$ dimensional real matrix is denoted $\reline^{n\times n}$. A time-varying variable, $x$ is denoted $x(t)$. For a function $f$ that depends on $x(t)$, we write $f(x; t)$. The absolute value of scalar $x$ is $|x|$. The Euclidean norm of the vector $x\in \reline^n$ is $\|x\|:=\sqrt{x^\top x}$. The $n$-dimensional vector of ones (zeroes) is  ${1}_n ({0}_n)$. For a differentiable function $J(x)$, the first-order derivative of $J(x)$ is denoted $\nabla_x J(x)$. For a set $\Omega$ and variable $x$, the Euclidean projection of $x$ onto $\Omega$ is $\mc{P}_\Omega(x) = \argmin_{x\in \Omega} \|x - x'\|_2$. The set $\mc{S} \subset \reline^m$ is a convex set if for $\alpha >0$ and for every  $x,y \in \mc{S}$, $\alpha x + (1-\alpha) y \in \mc{S}$. The cardinality of a $\mc{S}$ is denoted $\lfloor \mc{S} \rfloor$. Suppose that $\mc
{S}$ is a closed convex set, and $\mc{P}_{\mc{S}}(x)$ is a projection of $x$  onto the set $\mc{S}$, then there is a unique element $\mc{P}_{\mc{S}}(x) \in \mc{S}$ such that $\|x - \mc{P}_{\mc{S}}(x)\| = \inf_{y \in \mc{S}} \| x- y\|$. Similarly, the Euclidean projection of x onto the set  $\Omega$ is denoted $P_{\Omega}(x)\| = \argmin_{y \in \Omega} \| x- y\|$,

\subsection{Games Taxonomy}
\label{subsec:back::machinery}
Let a game be denoted as $\Gamma(\mc{V}, \Omega, J)$, where $\mc{V}$ are the players $\{1, \ldots, n\}$, each with action $x_i$ belonging in a constraint set $\Omega_i \in \reline^m$ and a local cost profile $J_i: \Omega_i \rightarrow \reline$ so that $J = \{J_1, \cdots, J_n\}$. The action profile of the game is $\Omega:=\Omega_1 \times \cdots \times \Omega_n$. The vector formed by all players' actions is  $x :=(x_i, x_{-i}) \triangleq \{x_i\}_{i=1}^n$, where the vector  formed by all the players' actions except those of player $i$ is $x_{-i} = \{x_{i'}\}_{i'=1, \, i'\neq i}^n \in \reline^{n_{-i}}$ where $n_{-i}:=n-n_i$.  

Define $K_i^p:=\{x_i \in \Omega_i \, |\, h_i(x_i)  \leq 0\}$ as the set of each individual player's constraint set, and $K^s:=\{x \in \Omega \,|\, g(x) \ge 0_m\}$ as the set of shared constraints, where $g(x) = c^\top x - d, \,\text{for } c=[c_1^\top, \cdots, c_n^\top]^\top, \ c_i \in \reline^m $ and $d \in \reline$.  It follows that player $i$'s action is constrained along two directions: constraints that depend on the action of other players $\ie {g(x)} \ge  0$, and individual constraints that depend on player $i$'s action \ie $x_i \in K_i$. Contrary to popular formulations in GNE-seeking literature, we have chosen $g(x)\ge 0$ since the skills, time indices, and other parameters we are optimizing for cannot be negative.
% We reserve this for future work.
%
%\begin{align}
%	K_{-i} = \prod_{i^\prime = 1, i^\prime \neq i}^{n} K_{i^\prime}.
%\end{align}

\subsection{Games Machinery}
\label{subsec:back:theoretical_machines}

Let us set a few definitions and preliminary results in motion. 
Denote the feasible set of player $i$' action by $K_i(x_{-i}):=\{x_i \ | \ (x_i, x_{-i}) \in K\}$, where $K := K^s \cap (K_1 \times \cdots K_n)$.   Suppose we define the generalized Nash equilibrium problem (GNEP) as $\Gamma(\mc{V}, \Omega, J, K)$, then the individual players in the game ``freeze" other players' actions $x_{-i}$ as exogenous variables, to solve
\begin{align}
	 \min_{x_i \in \Omega_i} &J_i(x_i, x_{-i}), \quad \forall\,\, i \in \mc{V} \nonumber\\
	\text{subject to } &x_i \in K_i(x_{-i}).
	\tag{GNEP}
	\label{eq:GNEP}
\end{align}

\begin{definition}[Neighbors of a Player]
	We define the neighbors $\mc{N}_i(t)$ of player $i$ at time $t$ as the set of all agents that lie within a predefined radius, $r_i$. $\mc{N}_i(t) = \{j \mid (j,i) \in \mc{E}\}$.%, of agent $i$ at time $t$. 
\end{definition}

\begin{definition}[Generalized Nash Equilibrium]
	In problem \eqref{eq:GNEP} , the action profile $x^\star:=(x_i^\star, x_{-i}^\star)$ is a GNE if $J_i(x_i^\star, x_{-i}^\star) \le J_i(x_i, x_{-i}^\star), \, \forall \, x_i \in \Omega_i, \, \forall i \in \mc{V}$.
	\label{prob:GNEP}
\end{definition} Assumptions are made for feasibility of the solution \eqref{eq:GNEP}.
\begin{assumption}[Interconnectivity]
	The game occurs over a time-varying \textit{connected}\footnote{That is, any two nodes in $\mc{V}$ are connected by a path.} and undirected\footnote{That is, $(i, j) \in \mc{E} \Leftrightarrow (j,i) \in \mc{E}$.} weighted  communication graph $\mc{G}(\mc{A})(t): \{\mc{V}, \mc{E}\}$ where $t>0$ with edges $\mc{E} \subset \mc{V} \times \mc{V}$ and  adjacency matrix $\mc{A}:=[a_{ij}] \in \reline^{n \times n}$. Player $i$ is connected to (can receive information from)  player $j$ if $(i,j) \in \mc{E}$ and vice versa. Let $a_{ij} > 0$ if $(i,j) \in \mc{E}$ and $a_{ij} = 0$ otherwise. For a $D = diag(d_{11}, \cdots, d_{nn}) \in \reline^{n\times n}$, where $d_{ii} = \sum_{j=1}^n a_{ij}$ for all $i \in \mc{V}$, we have $\mc{G}(\mc{A})$'s Laplacian as $\mc{L} := D- \mc{A} \in \reline^{n\times n}$
	\label{ass:connect}
\end{assumption}

\begin{assumption}[Cost Function]
	For $i \in \mc{V}$, $\Omega_i \in \reline^m$ is a nonempty, convex, and closed set; and the cost $J_i(x_i, x_{-i})$ is continuously differentiable on $\Omega$ and convex in $x_i$ for every fixed $x_{-i}$. Define $\nabla J(x) = [\nabla_{x_1} J_1^\top(x), \cdots, \nabla_{x_n} J_n^\top(x)]^\top$ as the \textit{single-value} game mapping. 
	\label{ass:costfun}
\end{assumption}

\begin{assumption}[Monotonicity and Game-mapping]
	The single-valued mapping $\mc{M}: \reline^{m} \rightarrow \reline^{m}$ is strongly monotone on the action constraint set $\Omega \in \reline^m$ if there exists a constant $\mu >0$ for any $x, x^\prime \in \Omega, \text{ where } x\neq x^\prime$ such that  $(\nabla J -\nabla J^\prime)^\top (x-x^\prime) \ge \mu \|x - x^\prime\|^2$ for any $\nabla J \in \mc{M}(x)$, $\nabla J^\prime \in \mc{M}(x^\prime)$.
	\label{ass:monotonic}
\end{assumption}

\begin{assumption}[Constraint Function]
	\label{ass:constraint_fun}
	The inequality constraint function $g(x) \ge 0$ is continuously differentiable and convex in $x$. Also, the feasible action set $K$ is nonempty, convex, and closed --- satisfying the Slater's condition \ie, $x^\star \in \text{int}(\Omega)$\footnote{The relative interior of the set $\Omega$ is denoted $\text{int}(\Omega)$.} such that $g(x^\star) \le 0_m$.
\end{assumption}

\begin{lemma}[Existence of a GNEP]
	Let the GNEP of Def. \ref{prob:GNEP} be given and suppose that (i) $\Omega_i \in \reline^m$ be a nonempty, convex, and compact such that for  $i \in \mc{V}$, $K_i(x_{-i})$ is nonempty, convex, closed, and $K_i$ is both upper and lower semi-continuous; and (ii) the local cost $J_i(x_i, x_{-i})$ is quasi-convex on $K_i(x_{-i})$ for every $i \in \mc{V}$. Then, a GNE exists.
	\label{lem:gnep}
\end{lemma}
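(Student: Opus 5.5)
The plan is the standard Kakutani fixed-point argument for generalized games, in the style of Arrow--Debreu and Ichiishi. The fixed points of a best-response correspondence are exactly the GNE, so it suffices to show that correspondence meets Kakutani's hypotheses. Throughout, I read the equilibrium condition in Def.~\ref{prob:GNEP} as ranging over $x_i \in K_i(x_{-i}^\star)$, which is the meaningful reading for \eqref{eq:GNEP}. I also assume that $J_i$ is continuous on $\Omega$, as guaranteed by Assumption~\ref{ass:costfun}.

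\textbf{Step 1 (best-response map).} Set $\Omega := \Omega_1\times\cdots\times\Omega_n$. This set is nonempty, convex and compact by (i). For each $i\in\mc{V}$ and $x\in\Omega$ define
\begin{align*}
B_i(x_{-i}) := \argmin_{y_i \in K_i(x_{-i})} J_i(y_i, x_{-i}),
\end{align*}
and let $B(x) := B_1(x_{-1})\times\cdots\times B_n(x_{-n})$. Since $K_i(x_{-i})\subseteq\Omega_i$, the set $K_i(x_{-i})$ is closed in a compact set, hence compact; it is also nonempty by (i). Because $J_i(\cdot,x_{-i})$ is continuous, Weierstrass gives $B_i(x_{-i})\neq\emptyset$. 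The set $B_i(x_{-i})$ is a sublevel set $\{y_i\in K_i(x_{-i}) : J_i(y_i,x_{-i})\le \min\}$ of a function that is quasi-convex on the convex set $K_i(x_{-i})$ by (ii). It is therefore convex, and it is closed by continuity. So $B$ maps $\Omega$ into nonempty, convex, compact subsets of $\Omega$.

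\textbf{Step 2 (closed graph).} This is the step I expect to be the main obstacle, because it is exactly where the upper and lower semicontinuity of $K_i$ enter. I would apply Berge's maximum theorem to the parametric problem $\min_{y_i\in K_i(x_{-i})} J_i(y_i,x_{-i})$. The constraint correspondence $x_{-i}\mapsto K_i(x_{-i})$ is continuous, meaning both upper and lower semicontinuous, with nonempty compact values, and the objective $J_i$ is jointly continuous. Berge then gives that $B_i$ is upper semicontinuous with compact values, and hence has a closed graph in the compact set $\Omega$.

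If I wanted to verify this directly instead of citing Berge, I would argue by sequences. Take $x^k\to x$ and $y_i^k\in B_i(x_{-i}^k)$ with $y_i^k\to y_i$. Upper semicontinuity of $K_i$ gives $y_i\in K_i(x_{-i})$. Now fix any $z_i\in K_i(x_{-i})$; lower semicontinuity supplies $z_i^k\in K_i(x_{-i}^k)$ with $z_i^k\to z_i$. Optimality gives $J_i(y_i^k,x_{-i}^k)\le J_i(z_i^k,x_{-i}^k)$, and passing to the limit yields $J_i(y_i,x_{-i})\le J_i(z_i,x_{-i})$. Hence $y_i\in B_i(x_{-i})$. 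A finite product of closed-graph correspondences again has a closed graph, so $B$ does too.

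\textbf{Step 3 (fixed point and conclusion).} Kakutani's theorem now applies to $B:\Omega\rightrightarrows\Omega$. Its domain is nonempty, convex and compact in $\reline^{nm}$, and it has nonempty convex values and a closed graph, so there is $x^\star\in B(x^\star)$. Unpacking this, for each $i$ we have $x_i^\star\in K_i(x_{-i}^\star)$ and $J_i(x_i^\star,x_{-i}^\star)\le J_i(x_i,x_{-i}^\star)$ for every $x_i\in K_i(x_{-i}^\star)$. That is precisely a GNE in the sense of Def.~\ref{prob:GNEP}, which proves Lemma~\ref{lem:gnep}.
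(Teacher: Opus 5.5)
Your proof is correct and is essentially the argument behind the result the paper relies on. The paper's proof just cites Thm.~4.1 of \citep{facchinei2010generalized} (Ichiishi's existence theorem), whose proof applies Kakutani's fixed-point theorem to the best-response correspondence and gets its closed graph from Berge's maximum theorem using upper and lower semicontinuity of $K_i$; your two adjustments, reading the equilibrium inequality over $x_i \in K_i(x_{-i}^\star)$ instead of $\Omega_i$ and taking continuity of $J_i$ from Assumption~\ref{ass:costfun}, match that theorem's standing conventions and are both needed for the statement to hold as written.
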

\begin{pf}(Existence of GNEP)
	Follows directly from Thm.~4.1 in \citep{facchinei2010generalized}.
\end{pf}
\begin{remark}
	When assumptions \ref{ass:costfun},  \ref{ass:monotonic}, and \ref{ass:constraint_fun} are fulfilled, then the GNEP $\Gamma(\mc{V}, \Omega, J, K)$ satisfies the existence condition  $x^\star$ in Lemma \ref{lem:gnep}.
\end{remark} 
\begin{lemma}[Optimality of the GNE]
	Suppose that there exists a multiplier $\lambda_i^\star \in \reline^m$, $\forall \, i \in \mc{V}$. The optimality condition for each player  can be found via the KKT conditions 
	\begin{align}
		\nabla_{x_i} J_i(x_i^\star, x_{-i}^\star)& + \langle \lambda_i^\star, \nabla_x g(x^\star) \rangle = 0_n, \,\, \forall \,\, i \in \mc{V}, \lambda_i^\star \in \mc{V}, \nonumber \\
		0_m \le & \,\lambda_i^\star \perp  \,\, g(x_i^\star, x_{-i}^\star) \ge 0_m
		\label{eq:variational_ineq} 
	\end{align}
	under the assumption of continuous differentiability. This follows from Thm.~4.6 in~\cite{facchinei2010generalized}.
	\label{lem:vi}
\end{lemma}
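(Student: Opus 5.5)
The plan is to reduce the statement to the KKT theorem for a single convex program, applied player by player. A GNE is, for each $i \in \mc{V}$, a minimizer of a parametric problem. Fix $x_{-i}^\star$; then $x_i^\star$ solves
\begin{align*}
\min_{x_i \in \Omega_i} J_i(x_i, x_{-i}^\star) \quad \text{subject to } g(x_i, x_{-i}^\star) \ge 0_m .
\end{align*}
The argument has three steps.

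\textbf{Step 1 (convexity).} By Assumption~\ref{ass:costfun}, $J_i(\cdot, x_{-i}^\star)$ is continuously differentiable and convex, and $\Omega_i$ is closed and convex. For the constraint, $g$ is affine, $g(x)=c^\top x - d$, so $\{x_i : g(x_i,x_{-i}^\star)\ge 0\}$ is a closed half-space and hence convex. This keeps the $\ge$ sign convention of the paper consistent with convexity; Assumption~\ref{ass:constraint_fun} alone would not give it, since it assumes $g$ convex, and for a general convex $g$ the superlevel set $\{g\ge 0\}$ need not be convex. Each player's problem is therefore a convex program.

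\textbf{Step 2 (constraint qualification).} I invoke the Slater point of Assumption~\ref{ass:constraint_fun}. Since $g$ is affine, feasibility of the constraint set together with a point of $\text{int}(\Omega)$ is enough: for affine constraints, nonemptiness of the relative-interior feasible set suffices and strict feasibility is not needed. This is the step I expect to be the main obstacle. The assumption as written asks for $g(x^\star)\le 0_m$, which conflicts with the $g\ge 0$ convention. I would therefore reinterpret it as a strictly feasible point $\hat x$ with $g(\hat x) > 0$ and $\hat x\in\text{int}(\Omega)$, or rely on affineness. Fixing $\hat x_{-i} = x^\star_{-i}$ requires care; because the constraint is linear, the linear constraint qualification (Abadie's) holds automatically, and that bypasses the issue.

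\textbf{Step 3 (KKT and assembly).} Given the constraint qualification, the standard KKT theorem yields $\lambda_i^\star \ge 0_m$ satisfying stationarity of the Lagrangian $J_i - \langle \lambda_i^\star, g\rangle$ in $x_i$, together with complementary slackness $\lambda_i^\star \perp g(x^\star)$. I take $\Omega_i$ either as the whole space or as absorbed into a normal-cone term that vanishes at interior points. The sign in the displayed condition, and the fact that $\nabla_{x_i} g = c_i$, follow by matching the sign convention: $g\ge 0$ enters as $-g\le 0$, which flips the sign of the multiplier term. I read the printed "$+$" as absorbing this convention. Stacking the conditions over $i\in\mc{V}$ gives \eqref{eq:variational_ineq}.

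For the converse direction, which the statement implicitly claims by calling these the optimality conditions, convexity from Step 1 makes KKT sufficient: any $(x^\star,\{\lambda_i^\star\})$ satisfying \eqref{eq:variational_ineq} makes each $x_i^\star$ optimal for its own problem, hence $x^\star$ is a GNE. This is exactly Thm.~4.6 of \citep{facchinei2010generalized}.
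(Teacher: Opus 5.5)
Your proposal is correct and follows essentially the same route as the paper, which offers no argument beyond invoking Thm.~4.6 of \citet{facchinei2010generalized}, i.e.\ the per-player KKT theorem (necessary under a constraint qualification, sufficient under convexity) that you unpack step by step. Your corrections along the way are warranted: with $\lambda_i^\star \ge 0_m$ and $g \ge 0_m$ the stationarity condition must read $\nabla_{x_i} J_i(x^\star) - \nabla_{x_i} g(x^\star)^\top \lambda_i^\star = 0$, so the printed ``$+$'' is an error in the statement rather than a convention you can absorb, and the equality form (with no normal-cone term) indeed requires $x_i^\star \in \text{int}(\Omega_i)$ or $\Omega_i = \reline^m$.
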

As a system of equations, \eqref{eq:variational_ineq}  can be written as 
\begin{align}
{J}&({x^\star, \lambda^\star}) = 0, \nonumber \\
0 \le\,\, &{\lambda^\star} \perp {g}({x}^\star) \,\,\ge \,\,0,  
\label{eq:gnep-kkt-vi}
\end{align} 
\begin{align}
\text{ for } &	{g(x^\star)} =  [g^\top(x_1^\star), \cdots, g^\top(x_n^\star)]^\top,\,
	{\lambda} =  [\lambda_1^{\star\top}, \cdots, \lambda_n^{\star\top}]^\top, \nonumber \\
 \text{and }&	{J}({x^\star, \lambda^\star}) =  [\nabla_{x_1} J_1^\top(x_1^\star),  \cdots,  \nabla_{x_n} J_n^\top(x_n^\star)]^\top. \nonumber
\end{align}

With the constraint qualification, the $\bm{x}$ portion of \eqref{eq:gnep-kkt-vi} is a first-order necessary condition for the GNEP; and under appropriate convexity assumptions,  the $\bm{x}$ part solves the GNEP so that \eqref{eq:gnep-kkt-vi} is sufficient condition for the GNEP.

\begin{remark}
	Lemma \ref{lem:vi} is the so-called variational inequality problem $\text{VI}(X, \nabla J(x))$ whose solution set is a special class of GNEs with all Lagrange multipliers equal. That is, the solution $x^\star$ to the $VI(X, \nabla J(x))$ under the KKT conditions \eqref{eq:variational_ineq} satisfies the GNE $x^\star$ if $\lambda_1^\star = \cdots = \lambda_n^\star = \lambda^\star$.
\end{remark}

%%%%%%%%%%%%%%%%%%%%%%%%%%%%%%%%%%%%%%%%%%%%%%%%%%%%%%%%%%%%%%%%%%%%%%%%%%%%%%%%

\section{A GNEP Trauma Resuscitation Scheme}
\seclabel{innovs}
In this section, we present our formalism of clinical TR workflows into an games and optimization problem.

%---------------------------------------------------------------------
\subsection{Clinical model}

\begin{table}[tb!]
\caption{Representative HCW skill attributes (ALS protocol).}
\centering
\begin{tabular}{|p{2.4cm}|l|}
\hline
\textbf{Taxonomy} & \textbf{Meaning} \\
\hline
CPR & Cardiopulmonary resuscitation \\
\hline
SHOCK & Deliver defibrillation shock \\
\hline
RHYTHM\_CHECK & Evaluate rhythm every 2 min \\
\hline
AIRWAY\_MGMT & Airway interventions \\
\hline
EPI & Epinephrine administration \\
\hline
TEAM\_COMM & Team communication \\
\hline
\end{tabular}
\label{tab:als}
\end{table}

Each healthcare worker is a player $i \in \mc{V}$ with objective $J_i$ (\eg, start CPR), individual constraints $K_i^p$ (\eg, hemorrhage control), and shared constraints $K_i^s$ (\eg, breathing or pulse checks) in a game $\Gamma(\mc{V}, \Omega, \{J_i\}, K)$ where $K_i = K_i^s \cap K_i^p$. Table~\ref{tab:als} itemizes typical HCW skill attributes inspired by the ALS code card for adult cardiac arrest~\citep{als_redcross}. The inequality constraint $h_i(x_i) \leq 0$ represents each player's clinical role assignment.

The state $x_i$ of each player is
\begin{align}
	x_i(t) := [\,s_i,\ a_i(t),\, f_i(t),\, \upsilon_i(t)\,]^\top
\end{align}
with skill proficiency index $s_i$ and time-dependent decision variables: alacrity index $a_i(t)$, fairness index $f_i(t)$, and communication efficiency $\upsilon_i(t)$. Here $s_i$ is a fixed parameter characterizing player $i$'s competence, while $[\,a_i(t), f_i(t), \upsilon_i(t)\,]$ constitutes player $i$'s action and the state of the dynamical system that converges to the Nash equilibrium. The action profile lies in $\Omega = K_1 \times \cdots \times K_n \subset \reline^{3n}$.

\noindent \textbf{The skill proficiency index} $s_i \in [0,1]$ is a fixed scalar averaging player $i$'s competence over past assignments and training (nursing, clinical certifications, physician training), calibrated by the institution. Performance degradation due to fatigue is captured indirectly through the time-varying alacrity index $a_i(t)$.

%---------------------------------------------------------------------
\noindent \textbf{Players' transient step response} is described as a second-order system~\citep{NormanNise}
\begin{align}
	\sigma_i(t) &= 1 - \frac{1}{\sqrt{1-\zeta^2}} \exp{(-\zeta \omega_n t)} \cos( \omega_n   \nonumber \\
	& \qquad \qquad\qquad\qquad\qquad\qquad\sqrt{1-\zeta^2}\,t - \phi)
	\label{eq:transient}
\end{align}
where $\phi = \arctan(\zeta/(\sqrt{1-\zeta^2}))$ for damping ratio $\zeta$ and natural frequency $\omega_n$. We use an underdamped model ($\zeta < 1$) since human physiological responses to stress overshoot before settling~\citep{Rosenblum_Rab_Admon_2025}. We adopt $\omega_n^i := 2\pi b_i / 60$ where $b_i$ is player $i$'s resting heart rate (bpm), a proxy for the engagement transient time.\footnote{In practice, sensor noise in $b_i$ would be handled by standard filtering (\eg, multisensor fusion, low-pass smoothing).
% and the projection $P_{\Omega_i}$ in~\eqref{alg:dist-gne-seek} confines states to feasible sets regardless.
}

\noindent \textbf{Alacrity measure}: The alacrity index $a_i$ encodes a player's readiness to execute tasks within their skillset. We model $a_i$ via the rise time $t_r$ of \eqref{eq:transient}, which is the time for $\sigma(t)$ to go from 10\% to 90\% of its final value. Fixing $\zeta = k < 1$ and using $\omega_n^i t$ as the normalized time variable, $a_i(t)$ is found as
\begin{align}
	a_i(t) = \Delta \sigma(t) / \omega_n^i, \quad t \ge 0.
\end{align}

\noindent \textbf{Fairness in workload distribution} should be encouraged when players share resources \citep{ekpo_skill-aligned_2025,Ekpo_Agarwal_Grimm_Molu_Taylor_2025}. If one worker bears workload beyond their capacity, the uneven throughput becomes untenable for efficient resuscitation. We adopt the Jain fairness index~\citep{jain_quantitative_1998}, a convex, scale-invariant metric, bounded in $[1/n, 1]$:
\begin{align}
	f_i(w; t)= \frac{\left(\sum_{i=1}^{n} w(t)\right)^2}{n\sum_{i=1}^{n} w(t)^2}, \quad w(t), t \ge 0
	\label{eq:jfi}
\end{align}
which measures the equity of work $w$ allocated to player $i$. Each player maintains a local team-level estimate $\hat f_i(t)$ obtained by running dynamic averaging consensus on $\sum_j w_j$ and $\sum_j w_j^2$. The per-task weights $\delta_\tau \in [0,1]$ that define the workload $w_i(t) = \sum_{\tau \in \mathcal{T}_i(t)}\delta_\tau$ are specified in Section~\ref{sec:numerics}.

\noindent \textbf{Communication efficiency.} We define the communication efficacy as a function of
\begin{inparaenum}[(i)]
	\item settling time;
	\item engagement level; and
	\item response time.
\end{inparaenum}
The \textbf{settling time} $t_s^i$ is the time for $\sigma_i(t)$'s damped oscillations to reach and stay within $\pm 2\%$ of steady state, approximated as
\begin{align}
	t_s^i = - \ln (\epsilon \sqrt{1-\zeta^2}) /\zeta \omega_n
\end{align}
with $\epsilon = 0.02$. The \textbf{response time} is the difference between reaching steady state and being within 10\% of $\sigma_i(t)$'s final value upon a task handover. The timely response is $\bar t_i = k\Delta t_{ij}$ where $\Delta t_{ij}$ is the inter-call time between players $i$ and $j$ for all $j \in x_{-i}$. The \textbf{engagement level} for player $i$ is a running mean of visit counts from neighbors $\mc{N}_i$ up to discretized time $\mc{K}$:
\[
	e_i (t) = \frac{1}{T n_{-i}}\int_{t_0}^{T} \left(\sum_{k=1}^{\mc{K}} \nu_{ij}(t)\right) dt ,
	\quad  \forall \, j \in x_{-i}.
\]
A ``visit'' is a discrete communication event (verbal handoff, task delegation, closed-loop confirmation) recorded at each timestep $k$. The \textbf{communication efficiency index} is
\begin{align}
	\upsilon_i (t) = 	\alpha(t_s^i + \bar{t}_i) + (1-\alpha) e_i(t) , \,\, \alpha \in (0, 1).
\end{align}

\subsection{Local costs and dynamics}

\begin{problem}
	We want a distributed algorithm for a GNEP $\Gamma(\mc{V}, \Omega, \nabla J(x), K)$ for the socio-technical game with coupled constraints, $K$.
\end{problem}
The local cost for player $i$ is the quadratic objective between player $i$ and its neighbors $\mc{N}_i$:
\begin{align}
	&J_i(x_i, x_{-i}; t) := J_i(x_i; t) + \langle J_i(x_{-i}; t) \rangle_{r_i}
\end{align}
where
\begin{align}
	J_i(x_i; t) &= \|s_i\|^2 + \|a_i (t)\|^2 + \|f_i(w; t)\|^2 +\|\upsilon_i(t)\|^2  \nonumber
\end{align}
\text{and}
\begin{align}
	 \langle J_i(x_{-i}; t) \rangle_{r_i}  &= \dfrac{1}{1+n_i(t)}\left(J_i(x_{i}; t) + \sum_{j \in \mc{N}_i(t)}^{} J_i(x_{-i}; t)\right)
\end{align}
denotes the cost of neighboring players within radius $r_i>0$ of player $i$. The resuscitation coordination is modeled as a dynamic game whose actions evolve under the distributed dynamics in~\eqref{alg:dist-gne-seek}, converging to a variational generalized Nash equilibrium (v-GNE) $x^* = (x_1^*, \ldots, x_n^*)$:

\begin{subequations}\label{alg:dist-gne-seek}
\begin{align}
\dot{x}_i &= P_{\Omega_i}\!\left(x_i - \alpha\nabla_{x_i} J_i (x) + \tfrac{\alpha\gamma}{n}\, \nabla_{x_i} g_i(x_i)^\top \lambda_i\right) - x_i \\[-2pt]
\dot{\lambda}_i &= P_{\mathbb{R}_+^p}\!\left[\kappa \sum_{j \in \mc{N}_i} \mathrm{sgn}(\lambda_j - \lambda_i) - g_i(x_i)\right] \\[-2pt]
\dot{\zeta}_i &= \rho \sum_{j \in \mc{N}_i} \mathrm{sgn}(\eta_j - \eta_i), \quad
\dot{\eta}_i = \zeta_i + J_i(x_{-i})
\end{align}
\end{subequations}

where $\alpha > 0$ is the primal step size; $\gamma > 0$ scales the
constraint coupling, normalized by $n$; and $\kappa, \rho > 0$ are
the consensus gains for the dual and aggregation channels. The
Lagrange multiplier $\lambda_i \in \mathbb{R}_+^p$ carries one
component per inequality constraint in $g_i(x_i)$; $\zeta_i$ and
$\eta_i$ are auxiliary scalars implementing finite-time dynamic
averaging consensus on $J_i(x_{-i})$. Initial conditions are
$x_i(0) \in \Omega_i$, $\lambda_i(0) \in \mathbb{R}_+^p$, and
$\zeta_i(0) = 0$. Solutions are interpreted similarly to Filippov~\citep{cortes2008discontinuous}. Algorithm~\eqref{alg:dist-gne-seek}
extends~\cite{liang2017distributed} to inequality constraints
$g(x) \ge 0$.
The constants are chosen as $\kappa > (n-1) h_1$ and $\rho > \gamma (n-1) h_2$, where
\[
h_1 = \sup_{i\in \mc{V}} \left(\sup_{x_i\in \Omega_i} \|\nabla_{x_i}J_i(\cdot, x_{-i})\| \sup_{y, y^\prime \in \Omega} \|y - y^\prime \| \right),
\]
\[
h_2 = \sup_{i \in \mc{V}} \left(\sup_{x_i\in \Omega_i} \|{g_i(x_i)}\|\right),
\]
computed distributively as in~\citep{liang2017distributed}.
%%%%%%%%%%%%%%%%%%%%%%%%%%%%%%%%%%%%%%%%%%%%%%%%%%%%%%%%%%%%%%%%%%%%%%%%%%%%%%%%

\section{Numerical Results}
\seclabel{numerics}

\begin{figure*}[t!]
\centering
\includegraphics[width=\textwidth]{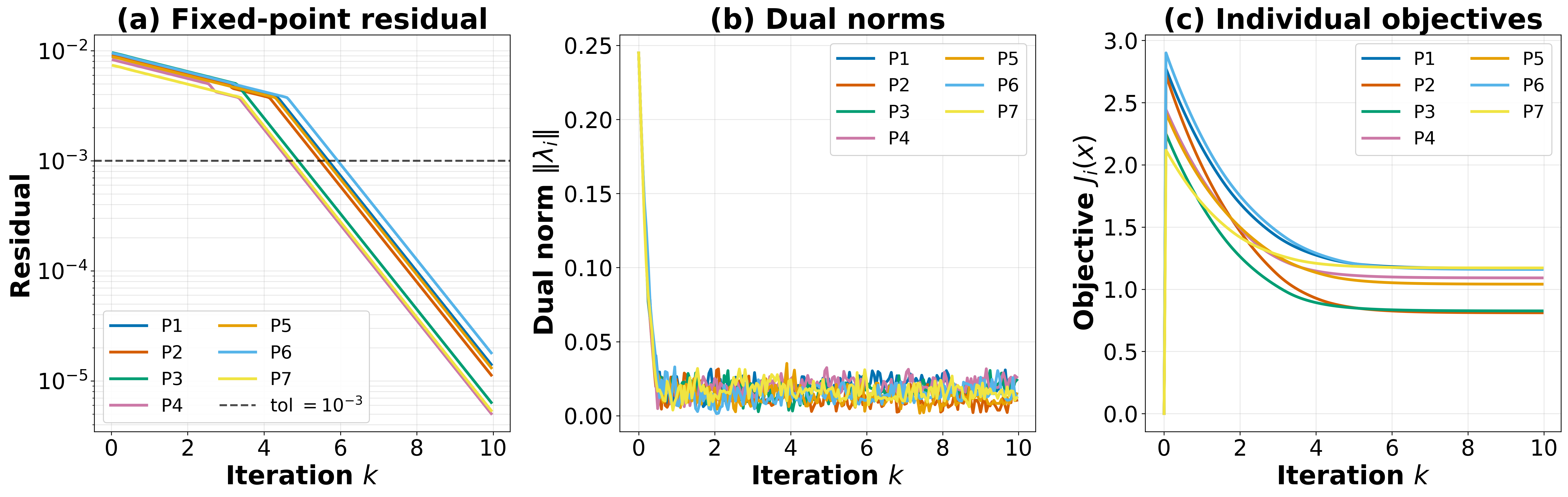}
\caption{Distributed GNEP simulation for 7 players.
(a) Fixed-point residual.
(b) Dual variable norms.
(c) Individual objectives.}
\label{fig:gnep_results}
\end{figure*}

\noindent\textbf{Setup.}
We validate the distributed GNEP algorithm on a trauma resuscitation
scenario with $7$ healthcare workers executing the Advanced Life Support
(ALS) cardiac arrest protocol~\citep{als_redcross}.
The synthetic dataset models a $20$-minute episode with clinically accurate
CPR cycles, rhythm checks, shock delivery, drug administration, airway
management, and Return of Spontaneous Circulation (ROSC) assessment.
Each timestamp assigns one atomic action per agent (\eg, \texttt{Start\_CPR},
\texttt{Deliver\_Shock}, \texttt{Administer\_Epinephrine}), following the
ideal ALS flowchart.

\noindent\textbf{Initial conditions and constraints.}
Workers are initialized with heterogeneous state components:
alacrity in $[0.5,1.0]$, fairness in $[0.8,0.95]$, communication efficiency
$\upsilon_i$ in $[0.9,1.0]$, and fixed skill proficiencies in $[0.6,0.9]$.
A spatial communication radius $r=200$~ft induces time-varying neighbor
sets $\mathcal{N}_i(t)$.
The shared constraint $g(x) \ge 0$ enforces minimum operational thresholds
$a_i, f_i, \upsilon_i \ge 0.2$, preventing excessive fatigue or coordination
degradation. All states are normalized to $[0,1]$.

\noindent\textbf{Task weighting.}
Each task $\tau$ is assigned a weight $\delta_\tau \in [0,1]$, with negative
penalties applied when a player's assignment falls below the
protocol-prescribed threshold: $-1.0$ for suboptimal role assignment, $-0.5$
for low alacrity, and $-0.5$ for low energy. The negative values discount
the workload contribution of substandard assignments so that the resulting
workload $w_i(t) = \sum_{\tau \in \mathcal{T}_i(t)} \delta_\tau$ enters the
Jain fairness index in~\eqref{eq:jfi} with the right sign.

\noindent\textbf{Convergence.}
Figure~\ref{fig:gnep_results} shows the simulation results.
Fixed-point residuals decay below the $10^{-3}$ tolerance within roughly
six iterations (panel a), confirming that all players reach a stable
joint decision under the time-varying graph. Dual norms remain bounded
throughout (panel b), indicating that the constraints remain
appropriately enforced without divergence. Individual objectives reach
heterogeneous steady values consistent with the KKT conditions of a
v-GNE (panel c); the spread across players reflects each player's
fixed skill $s_i$, since each one optimizes its own cost rather than
a shared team cost.

\noindent\textbf{Communication topology.}
Figure~\ref{fig:topology} shows the communication graph
at $k=0$, with seven healthcare workers within a $200$-ft
radius. Edges connect player pairs whose Euclidean distance falls
below the radius. Since player positions drift after each
iteration, the neighbor sets $\mathcal{N}_i(k)$ change over time, and
the graph in subsequent iterations differs from the initial snapshot. The dynamics in~\eqref{alg:dist-gne-seek} operate on
the graph $\mathcal{G}(\mathcal{A})(k)$ at each iteration.

% \noindent\textbf{Baseline.}
% The baseline is a centralized greedy task-assignment scheme that assigns
% each pending ALS atomic action to the player with the highest remaining
% $s_i \cdot a_i(t)$ score. It ignores coupled constraints.

\noindent\textbf{Scalability.}
Figure~\ref{fig:scalability} shows how the algorithm scales with team
size $n$. The number of iterations to reach the $10^{-3}$ convergence threshold
depends on how well-connected the communication graph is. As long as
the graph stays well-connected as $n$ grows, the iteration count grows
only mildly.
Empirically, this count grows mildly with $n$, from $\sim 124$ at $n=5$
to $\sim 147$ at $n=50$. Wall-clock time per iteration is dominated by
the neighbor sums and scales as $O(n + |\mathcal{E}|)$, well below the
$O(n^2)$ worst-case reference.

\begin{figure}[tbp]
\centering
\includegraphics[width=\columnwidth]{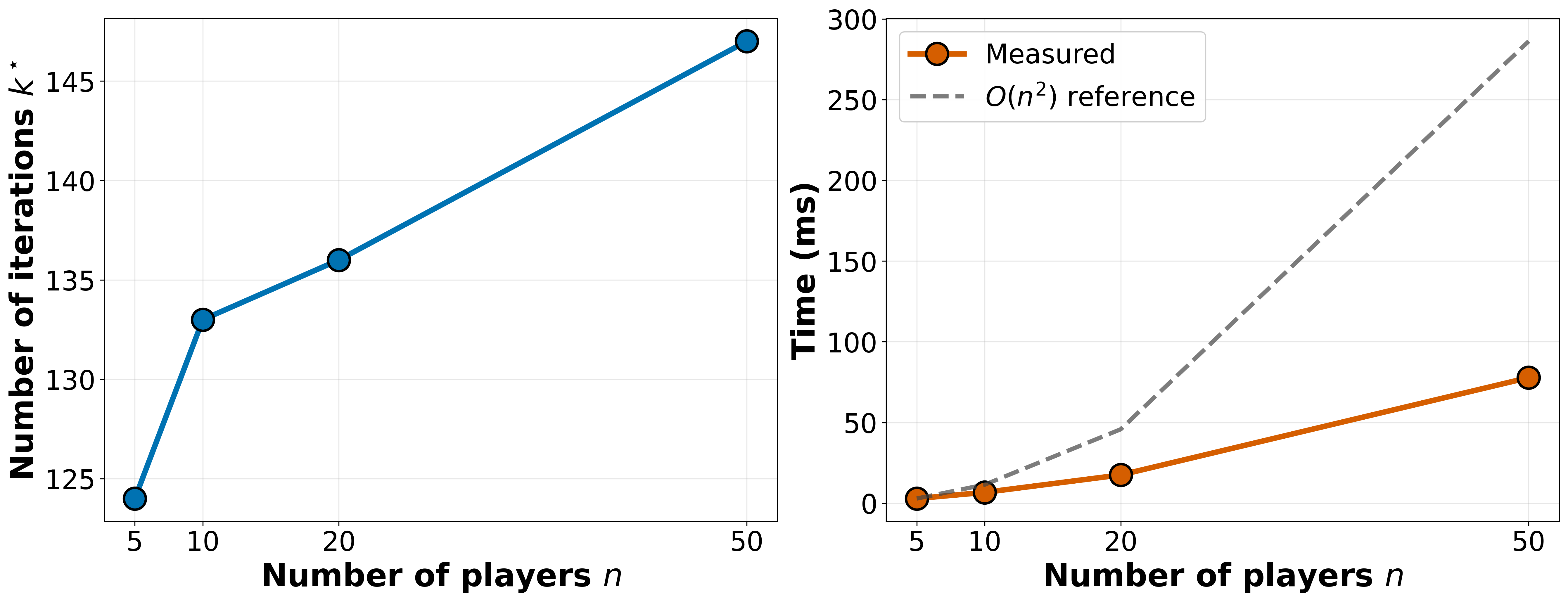}
\caption{Scalability of the distributed GNEP algorithm.
(a) Iteration count to reach the convergence threshold $10^{-3}$
versus $n$. (b) Wall-clock time versus $n$, with $O(n^2)$ worst case reference
(dashed).}
\label{fig:scalability}
\end{figure}

\noindent\textbf{Chattering analysis.}
The signum function in~\eqref{alg:dist-gne-seek} is discontinuous,
which can cause small oscillations in numerical simulation. To
mitigate these oscillations, we smooth the signum with a
``smooth-abs'' function
\begin{align}
	\mathrm{sgn}_\alpha(x) := \sqrt{x^2 + \alpha^2} - \alpha,
	\quad \alpha > 0,
	\label{eq:smooth-abs}
\end{align}
following~\citet{tassa2012synthesis}. The original discontinuous
dynamics are interpreted in the standard Filippov framework
of~\citet{cortes2008discontinuous}. Figure~\ref{fig:chattering}
compares the two. Smoothing preserves convergence to the v-GNE
and eliminates the oscillations.

\begin{figure}[tbp]
\centering
\includegraphics[width=\columnwidth]{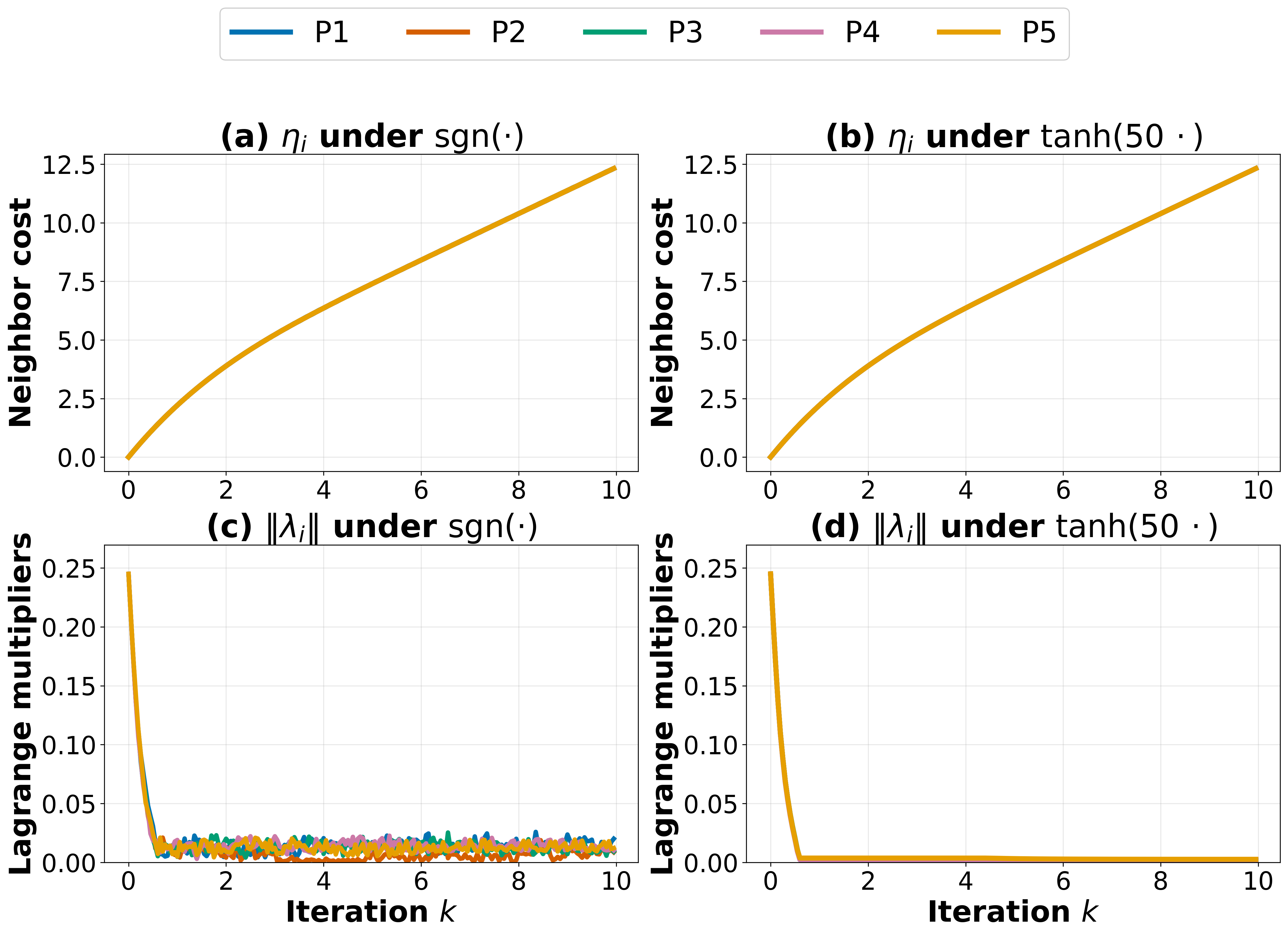}
\caption{Discontinuous vs.\ smoothed signum, $n=5$ players.
(a, b) Neighbor cost under $\mathrm{sgn}(\cdot)$ and
$\tanh(50\,\cdot)$. (c, d) Lagrange multipliers under each.}
\label{fig:chattering}
\end{figure}

%%%%%%%%%%%%%%%%%%%%%%%%%%%%%%%%%%%%%%%%%%%%%%%%%%%%%%%%%%%%%%%%%%%%%%%%%%%%%%%%
\FloatBarrier

\section{Conclusion}
\seclabel{conclude}
This paper formulated trauma resuscitation as a distributed generalized
Nash equilibrium problem in which healthcare worker skill, alacrity, fairness,
and communication efficiency enter as coupled decision variables. A
primal--dual seeking algorithm with inequality constraints was proposed
and analyzed. Simulations on a seven-player Advanced Life Support scenario
demonstrated convergence with bounded dual variables, and a sweep over up
to 50 players characterized the dependence of iteration count on team size.
This is a feasibility study on a synthetic clinical workflow that does not
capture protocol deviations or stress-induced behavior in the decision-making
process.
%%%%%%%%%%%%%%%%%%%%%%%%%%%%%%%%%%%%%%%%%%%%%%%%%%%%%%%%%%%%%%%%%%%%%%%%%%%%%%%%

%\begin{ack}
%Place acknowledgments here.
%\end{ack}

%\section*{DECLARATION OF GENERATIVE AI AND AI-ASSISTED TECHNOLOGIES IN THE WRITING PROCESS}
%During the preparation of this work the author(s) used [NAME TOOL / SERVICE] in order to [REASON]. After using this tool/service, the author(s) reviewed and edited the content as needed and take(s) full responsibility for the content of the publication.
\FloatBarrier
% \clearpage  
\bibliography{ifacconf} % bib file to 
%\appendix
%\section{A summary of Latin grammar}    % Each appendix must have a short title.
%\section{Some Latin vocabulary}              %
\end{document}